\newtheorem{theorem}{Theorem}
\newtheorem{lemma}{Lemma}
\newtheorem{proposition}[theorem]{Proposition}
\theoremstyle{definition}
\newtheorem{definition}{Definition}
\theoremstyle{remark}
\newtheorem{remark}{Remark}
\newcommand{\mynewline}{\mbox{}\\}
\newcommand{\E}[1]{\mathbb{E} \left( #1 \right)}
\newcommand{\EQW}[1]{\mathbb{E}_{ \bX, \bY} \left( #1 \right)}
\newcommand{\EQWU}[1]{\mathbb{E}_{\bX, \bY, \mathcal{U}} \left( #1 \right)}
\newcommand{\ExU}[1]{\mathbb{E}_{\mathcal{U}} \left( #1 \right)}
\newcommand{\BRA}[1]{\left( #1 \right)}
\newcommand{\BRAs}[1]{\left\{ #1 \right \}}
\newcommand{\PR}[1]{Pr\left\{ #1 \right\}}
\newcommand{\PRs}[2]{Pr_{#1}\left\{ #2 \right\}}
\newcommand{\bX}{{\textbf X}}
\newcommand{\bY}{{\textbf Y}}
\newcommand{\cX}{{\mathcal X}}
\newcommand{\cY}{{\mathcal Y}}
\newcommand{\cC}{{\mathcal C}}
\newcommand{\cD}{{\mathcal D}}
\newcommand{\bx}{\textbf{x}}
\newcommand{\by}{\textbf{y}}
\newcommand{\ie}{{\emph{i.e.}}}
\newcommand{\eg}{{\emph{e.g.}}}
\newcommand{\etal}{{\emph{et al.}}}
\newcommand{\Ind}[1]{ \mathds{1}_{\BRAs{#1}} }
\newcommand{\MIN}[1]{ \smash{\displaystyle\min_{#1}} }
\newif\ifFullProofs
\begin{document}

\title{Achievable and Converse bounds over a general channel and general decoding metric}

\author{Nir~Elkayam ~~~~~~~Meir~Feder \\
        Department of Electrical Engineering - Systems\\
        Tel-Aviv University, Israel \\
        Email: nirelkayam@post.tau.ac.il, meir@eng.tau.ac.il
}

\maketitle

\subsection*{\centering Abstract}
\textit{
Achievable and converse bounds for general channels and mismatched decoding are derived. The direct (achievable) bound is derived using random coding and the analysis is tight up to factor 2. The converse is given in term of the achievable bound and the factor between them is given. This gives performance of the best rate-R code with possible mismatched decoding metric over a general channel, up to the factor that is identified. In the matched case we show that the converse equals the minimax meta-converse of Polyanskiy \etal\ \cite{polyanskiy2010channel}.
}

\section{Introduction}\label{sec:intro}

Much attention has been drawn in the last few years to the analysis of coding performance in the finite block length regime, especially the memoryless case, where the normal approximation provides a means to approximate the coding rate as a function of the block-length and fixed error probability.

Information spectrum methods were used to derive a general formula for the channel capacity \cite{DBLP:journals/tit/VerduH94}, \cite{somekh2013general} in both the matched and the mismatched cases; however, these methods were not implemented directly to the finite block length regime or the error exponent (large deviation) regime.

In this paper we take a somewhat general approach, and evaluate the general (channel and decoding metric) random coding situation. The \emph{pairwise error probability} is crucial to the evaluation of the random coding error probability and is usually defined by considering ties as errors. This was already done in \cite{ElkayamEE2014}, but a more refined analysis of how to resolve ties, given in this paper, allows us to define the \emph{exact} pairwise error probability that must be \textbf{randomized} and can be regarded as a continuous generalization of the pairwise error probability, even in the discrete case.

Applying the formulas from \cite{ElkayamEE2014} to the randomized pairwise error probability allows us to provide tight bounds that are tighter than previously known bounds \cite{haim2013importance} (RCU, RCU*, and DT bounds). Formulas for the error probability, error exponent, and their connection as a function of the rate are given.

A ``converse'' result for the random coding situation is proved to be a converse for general deterministic codes (also in the mismatched case). In the matched case we show that the converse equals the minimax meta converse of Polyanskiy \etal \cite{polyanskiy2010channel}.

Combining these observations, the main contribution of this paper is as follows: In any general channel there are two possible situations. In the first, the performance of random coding can become very close to the best optimal code at a given rate (up to a constant factor); the second is a straight line regime, which means that the derivative of the random coding error exponent is close to -1 and random coding cannot do better than this. This phenomenon is known for DMCs, where the random coding error exponent is known to be tight for rates above the critical rate according to the sphere packing bound.

The paper is organized as follows: In section \ref{section:background} we present the background for the results in the paper. In section \ref{section:analysis} we provide a new analysis of the general random coding performance with a randomized maximum metric decoder that we define. In section \ref{section:general_formula} we use the formulas derived in \cite{ElkayamEE2014} to get a general expression for the error probability and error exponents in terms of the cumulative distribution function (CDF) of the log of the pairwise error probability. In section \ref{section:general_converse} we prove the converse for the general case and show the equivalence to the minimax meta-converse. In \ref{section:discussion} we discuss the results and present further research directions that are on-going investigations.

\section{Notation and Background} \label{section:background}
\subsection{Notation}
This paper uses bold lower case letters (\eg, $\bx$) to denote a particular value of the corresponding random variable denoted in capital letters (\eg, $\bX$). Calligraphic fonts (\eg, $\cX$) represent a set. $W(\by|\bx)$ denotes a general channel from $\cX$ to $\cY$. Throughout this paper $\log$ will be defined to base $e$ and rates are expressed in $nats$. $\PR{A}$ will denote the probability of the event $A$.


\subsection{Random Coding}
A \textbf{random code} with $M=e^R$ codewords and \textbf{prior distribution} $Q(\bx)$ on $\cX$ is:
\begin{align}
\cC = \BRAs{\bx_1, \bx_2, ..., \bx_{M} } \subset \cX
\end{align}
where each codeword $\bx_i$ is drawn independently according to the distribution $Q(\bx)$. The \textbf{encoder} maps the input message $k$, drawn uniformly over the set $\BRAs{1, ..., M}$, to the appropriate codeword, \ie, $\bx_k$. A \textbf{decoding metric} is a function $m: \cX \times \cY \to \mathbb{R}$. The \textbf{random decoder} associated with the decoding metric $m$ is the \textbf{random function} $\cD_m: \cY \to \BRAs{1, ..., M}$ defined by:

\begin{equation}\label{RandomizeDecoder:def1}
\cD_m(\by) = \hat{i} = argmax_{1 \leq i \leq M} \BRA{m(\bx_i,\by)},
\end{equation}
where the decoder randomly chooses, uniformly, between all the competing words that have the same maximal metric value.

\begin{definition}
The \textbf{average error probability} associated with the decoder $\cD_m$ over the channel $W(\by|\bx)$ is denoted by $\bar{P}_{e,m,W}\BRA{R}$. The error probability captures the randomness in the codebook selection, the transmitted message, the channel output, and the decoder randomness.
\end{definition}

\begin{remark}
When the decoding metric is $m(\bx,\by) = \log W(\by|\bx)$, the error probability is minimized and the decoder is called the \emph{Maximum Likelihood} decoder. The \textbf{mismatch} case is when the metric is not matched to the channel and may result in performance degradation.
\end{remark}

\subsection{Binary Hypothesis testing and the Meta-Converse}

Recall some general (and standard) definition about the optimal performance of a binary hypothesis testing between two probability measures $P$ and $Q$ on $W$:
\begin{equation}\label{binary_hypothsis:beta}
  \beta_{\alpha}\BRA{P,Q} = \MIN{\substack{P_{Z|W} :\\ \sum_{w\in W}P(w)P_{Z|W}(1|w) \geq \alpha} } \sum_{w\in W}Q(w)P_{Z|W}(1|w),
\end{equation}
where $P_{Z|W}:W \rightarrow \BRAs{0,1}$ is any randomized test. The minimum is guaranteed to be achieved by the
Neyman Pearson lemma. Thus, $\beta_{\alpha}\BRA{P,Q}$ gives the minimum probability of error under hypothesis $Q$ if the probability of error under hypothesis $P$ is not larger than $1-\alpha$. $\beta$ is the \textbf{power} at \textbf{significance level} $1-\alpha$.

\begin{remark}
  The optimal test is:
  \begin{equation*}
    P_{Z|W} = \Ind{ \log\BRA{\frac{P(w)}{Q(w)} } > \lambda} + \delta\cdot\Ind{ \log\BRA{\frac{P(w)}{Q(w)} } = \lambda},
  \end{equation*}
  where $\lambda, \delta$ are tuned so that $\sum_{w\in W}P(w)P_{Z|W}(1|w) = \alpha$
\end{remark}

Polyanskiy \etal\ \cite{polyanskiy2010channel} proved the following general converse result for the average error probability:
\begin{theorem}
  For any code with $M$ equiprobable codewords:
  \begin{equation}\label{MetaConverse:1}
    \epsilon \geq \inf_{Q(\bx)}\sup_{Q(\by)} \beta_{1-\frac{1}{M}}\BRA{Q(\bx)Q(\by), Q(\bx)W(\by|\bx)}
  \end{equation}
  \begin{equation}\label{MetaConverse:2}
    \frac{1}{M} \geq \inf_{Q(\bx)}\sup_{Q(\by)} \beta_{1-\epsilon}\BRA{Q(\bx)W(\by|\bx), Q(\bx)Q(\by)}.
  \end{equation}
\end{theorem}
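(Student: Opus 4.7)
The plan is to reduce the coding problem to binary hypothesis testing, a standard move for this type of converse. Fix any code $\cC=\BRAs{\bx_1,\ldots,\bx_M}$ with its (possibly randomized) decoder, writing $\cD_i(\by)\in[0,1]$ for the probability that $\by$ is decoded as message $i$, so $\sum_i \cD_i(\by)=1$ for every $\by$. Let $\tilde Q(\bx)$ be the uniform distribution on $\cC$ and let $Q(\by)$ be an arbitrary auxiliary output distribution. Consider the randomized binary test $Z$ on $\cX\times\cY$ given by $Z(\bx_i,\by)=\cD_i(\by)$, applied to the two hypotheses $P_0(\bx,\by)=\tilde Q(\bx)W(\by|\bx)$ and $P_1(\bx,\by)=\tilde Q(\bx)Q(\by)$.

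Under $P_0$ the test outputs $1$ with probability exactly $1-\epsilon$, by the definition of the average error probability. Under $P_1$ the probability of outputting $1$ is
$$
\sum_i \frac{1}{M}\sum_{\by} Q(\by)\cD_i(\by) \;=\; \frac{1}{M},
$$
because $\sum_i \cD_i(\by)=1$ pointwise in $\by$. Since $\beta_{1-\epsilon}(P_0,P_1)$ is the minimum type-II probability over all admissible tests and our $Z$ is one such test, we obtain $\beta_{1-\epsilon}(P_0,P_1)\le 1/M$. The distribution $\tilde Q(\bx)$ is a particular admissible input law, and $Q(\by)$ was arbitrary, so taking the infimum over $Q(\bx)$ (which is no larger than our particular choice) and the supremum over $Q(\by)$ (which is no smaller than any particular $Q(\by)$) yields \eqref{MetaConverse:2}.

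For \eqref{MetaConverse:1} I would run the same argument with the complementary test $1-Z$, whose probability of outputting $1$ is $\epsilon$ under $P_0$ and $1-1/M$ under $P_1$. Viewed as a test of $P_1$ against $P_0$, this gives $\beta_{1-1/M}(P_1,P_0)\le \epsilon$ by the same minimality argument, and the identical inf-sup step completes the proof. The only delicate point is the randomized decoder, but this is harmless because $\beta_\alpha$ is defined over randomized tests $P_{Z|W}$ to begin with, so fractional values $\cD_i(\by)\in[0,1]$ are directly admissible. The main obstacle is really only bookkeeping: keeping straight which hypothesis plays the role of $P$ and which plays the role of $Q$ in the two statements, and recognizing that the complementary test automatically swaps these roles.
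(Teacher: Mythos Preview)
Your proof is correct and is essentially the original Polyanskiy--Poor--Verd\'u argument: a single randomized test $Z(\bx_i,\by)=\cD_i(\by)$ on the joint space, whose two probabilities $1-\epsilon$ and $1/M$ are read off directly from the decoder normalization $\sum_i \cD_i(\by)=1$.

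The paper, in Appendix~\ref{App:AppendixB}, takes a genuinely different route, which it frames as a ``sphere packing'' proof of \eqref{MetaConverse:1}. There the argument works one message at a time: the complement of each decoding region $D_m$ is used as a (suboptimal) test between $Q(\by)$ and $W(\cdot|\bx_m)$, yielding $\varepsilon_m \geq \beta_{1-\lambda_m}$ with $\lambda_m = Q(D_m)$. The per-message bounds are then averaged and merged into a single $\beta_{1-1/M}$ via the joint convexity of $(\alpha,P)\mapsto\beta_\alpha(P\cdot Q(\by),P\cdot W(\by|\bx))$, using that $\sum_m \lambda_m = 1$. Your argument is shorter and sidesteps the convexity lemma entirely; the paper's decomposition, on the other hand, makes the sphere-packing geometry explicit (the $\lambda_m$ are the $Q(\by)$-volumes of the decoding regions and must sum to one) and produces the bound message-by-message before aggregation, which can be useful when the per-message errors or non-deterministic encoder distributions $P(\bx|m)$ are of independent interest.
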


The first version gives the lower bound on the error probability in terms of the rate and the second version gives the lower bound on the rate in terms of the error probability. Using equation \eqref{MetaConverse:1} and instantiating $Q(\by)$, they proved that lots of other known lower bounds on the error probability can be derived from this converse and hence the name: \textbf{The Meta Converse}.

\section{Analysis of the randomized decoder performance with random coding} \label{section:analysis}

Breaking ties should be done arbitrarily between the codewords that receive maximum metric value. We propose to add a ``small'' dither to each metric so that the probability of ties is 0. The dither should be small enough so that the order between different symbols in the original metric remains the same in the randomized metric.

We proposed the following interpretation of the random decoder. Given $\by$, the received word, and the metric $m(\bx,\by)$ define the randomized metric of the $k$-th codeword as follows:

\begin{align}\label{pairwise:def}
  \xi_{k,y} &= p_{e,m,x_k,y} \notag \\
  &= Q\BRA{m(\bX,\by) > m(\bx_k,\by)} + \mathcal{U}_k \cdot Q\BRA{m(\bX,\by) = m(\bx_k,\by)}
\end{align}

where $Q(\bx)$ is the prior and $\bx_k$ is the codeword associated with the $k$-th symbol and $\mathcal{U}_k$ is uniform over the unit interval $[0,1]$ and independent for each metric. Notice that this metric preserves the ordering in the original metric. When using this random metric we decode for the \emph{minimum} metric, probability of ties is 0 and the performance is equivalent to that of the randomized decoder defined in \eqref{RandomizeDecoder:def1}.

In the following proposition we summarize the properties of this random metric:
\begin{proposition} \label{prop:propeties_pe}\mynewline
(1) For any $\bx_1, \bx_2, \by$:
\begin{equation}\label{RandomMetric:Order}
  m(\bx_1,\by) > m(\bx_2,\by) \Rightarrow p_{e,m,x_1,y} < p_{e,m,x_2,y}
\end{equation}
(2)
\begin{align*}
          Q\BRA{m(\bX,\by) > m(\bx,\by)} &\leq p_{e,m,x,y} \\
                                         &\leq Q\BRA{m(\bX,\by) \geq m(\bx,\by)}
\end{align*}
(3)
\begin{equation}\label{RandomMetric:Uniform}
  \PR{p_{e,m,\bX,y} < u} = u, u \in [0,1]
\end{equation}
where the probability is with respect to $Q(\bx)$ and the uniform variable $\mathcal{U}$, \ie\ $p_{e,m,\bX,y}$ is uniform on $[0,1]$.
\end{proposition}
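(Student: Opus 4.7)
The plan is to unwind the definition of $p_{e,m,\bx,\by}$ given in \eqref{pairwise:def} and apply elementary facts about CDFs with atoms. Part (2) is immediate: since $\mathcal{U} \in [0,1]$, the dither term $\mathcal{U} \cdot Q(m(\bX,\by) = m(\bx,\by))$ lies between $0$ and $Q(m(\bX,\by) = m(\bx,\by))$. Adding these two extremes to the deterministic term $Q(m(\bX,\by) > m(\bx,\by))$ yields the lower bound $Q(m(\bX,\by) > m(\bx,\by))$ and the upper bound $Q(m(\bX,\by) \geq m(\bx,\by))$.

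For Part (1), set $a_i = m(\bx_i,\by)$ with $a_1 > a_2$ and decompose $\{m(\bX,\by) > a_2\}$ into the three disjoint events $\{m(\bX,\by) > a_1\}$, $\{m(\bX,\by) = a_1\}$, and $\{a_2 < m(\bX,\by) < a_1\}$. Writing $\xi_i = p_{e,m,\bx_i,\by}$, a direct computation yields
\[
\xi_2 - \xi_1 = (1-\mathcal{U}_1)\,Q(m(\bX,\by) = a_1) + Q(a_2 < m(\bX,\by) < a_1) + \mathcal{U}_2\,Q(m(\bX,\by) = a_2),
\]
which is non-negative and is strictly positive almost surely as soon as either $\bx_1$ or $\bx_2$ lies in the support of $Q$, since $\mathcal{U}_1 < 1$ and $\mathcal{U}_2 > 0$ with probability one.

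Part (3) is the main content of the proposition, and is the whole reason for introducing the dither $\mathcal{U}$ in the first place. Let $F(t) = Q(m(\bX,\by) \leq t)$ and $T = m(\bX,\by)$, so that $p_{e,m,\bX,\by} = 1 - F(T) + \mathcal{U}\,(F(T) - F(T^-))$. The key step (the generalized probability integral transform) is that $V := F(T^-) + \mathcal{U}\,(F(T)-F(T^-))$ is uniform on $[0,1]$: conditionally on $T = t$, $V$ is uniform on $[F(t^-), F(t)]$, and integrating over $T$ recovers the full uniform law on $[0,1]$. Writing $1-V = 1 - F(T) + (1-\mathcal{U})\,(F(T)-F(T^-))$ and noting that $1-\mathcal{U}$ is itself uniform on $[0,1]$ and independent of $T$, the random variable $p_{e,m,\bX,\by}$ has the same distribution as $1-V$, and is therefore uniform on $[0,1]$. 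The only subtle point is the handling of atoms in part (3): for a continuous metric distribution the classical probability integral transform applies directly, but in the discrete or mixed case---which is the interesting setting here---one must verify that $\mathcal{U}$ interpolates correctly across every atom, and this is exactly what the construction of the randomized pairwise error probability is designed to do.
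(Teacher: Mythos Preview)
Your proof is correct and, for parts (1) and (2), matches the paper exactly (the paper simply calls them ``obvious''). For part (3) you invoke the distributional transform: you recognize $p_{e,m,\bX,\by}$ as $1-F(T)+\mathcal{U}\bigl(F(T)-F(T^-)\bigr)$ and reduce to the known fact that $F(T^-)+\mathcal{U}\bigl(F(T)-F(T^-)\bigr)$ is uniform. The paper instead does the same computation by hand: for a given $u$ it locates the unique pair $(\bx,\tau)$ with $u=Q(m(\bX,\by)>m(\bx,\by))+\tau\,Q(m(\bX,\by)=m(\bx,\by))$, then splits $\PR{p_{e,m,\bX,\by}<u}$ according to whether $m(\bX,\by)$ is larger than, equal to, or smaller than $m(\bx,\by)$, using part (1) to handle the first and third cases. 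Your route is cleaner because it names the standard lemma and avoids re-deriving it; the paper's route is self-contained and makes the role of part (1) explicit. They are the same argument at the level of the underlying conditioning.

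One small remark on your part (1): the strict inequality in \eqref{RandomMetric:Order} can fail when neither $\bx_1$ nor $\bx_2$ (nor anything with metric strictly between them) lies in the support of $Q$, as you yourself note. The paper's statement is silent on this degenerate case; since codewords are drawn from $Q$, it never arises in the sequel.
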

Notice that property \eqref{RandomMetric:Order} is exactly the order preserving of the random metric we proposed.
\begin{proof}
The first 2 properties are obvious. To prove the third property, notice that for each $u$, there exist $\bx$ and $\tau$ such that: $u=Q\BRA{m(\bX,\by) > m(\bx,\by)} + \tau \cdot Q\BRA{m(\bX,\by)=m(\bx,\by)}$; then by using \eqref{RandomMetric:Order}:
\begin{align*}
 \PR{p_{e,m,X,y} < u} &= \sum_{\bx':m(\bx',\by) > m(\bx,\by)}Q(\bx') \\
                      &+ \PR{ \mathcal{U} < \tau} \cdot \sum_{\bx':m(\bx',\by) = m(\bx,\by)}Q(\bx') \\
  &= Q\BRA{m(\bX,\by) > m(\bx,\by)} \\
  &+ \tau \cdot Q\BRA{m(\bX,\by)=m(\bx,\by)} \\
  &= u
\end{align*}
\end{proof}

Next, we want to give general exact and approximate evaluations of the random coding error probability. To that end, let:
\begin{equation*}
  f(x,M) = 1-\BRA{1-x}^{M-1}
\end{equation*}
$f(x,M)$ represent the probability of decoding error given that $x$ is the pairwise error between the true codeword and another random metric of a randomly chosen codeword, and $M$ is the number of words in the codebook.
We have the following lemma that tightly bounds $f(x,M)$:
\begin{lemma} \label{lemma:upper_lower_bound}
  For any $x \in [0,1]$ and $M$: $\frac{1}{2} \cdot \min(1,M \cdot x) \leq f(x,M) \leq \min(1,M \cdot x)$
\end{lemma}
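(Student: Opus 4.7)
The plan is to prove the two inequalities separately. For the upper bound $f(x,M) \leq \min(1, Mx)$, I would combine two elementary facts: $f(x,M) \leq 1$ trivially since $(1-x)^{M-1} \geq 0$, and Bernoulli's inequality $(1-x)^{M-1} \geq 1-(M-1)x$ (valid for $x \in [0,1]$ and integer $M-1 \geq 1$), which gives $f(x,M) \leq (M-1)x \leq Mx$. Taking the minimum produces the desired bound.

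For the lower bound I would exploit the concavity of $x \mapsto f(x,M)$ on $[0,1]$, which for integer $M \geq 2$ follows from $f''(x,M) = -(M-1)(M-2)(1-x)^{M-3} \leq 0$. Since $f(0,M) = 0$, concavity forces the secant-slope function $x \mapsto f(x,M)/x$ to be non-increasing on $(0,1]$; meanwhile $f$ itself is non-decreasing in $x$ because $f'(x,M) \geq 0$. These two monotonicity properties, evaluated at the threshold $x = 1/M$, will yield the claim. The key auxiliary estimate is
\[
  f(1/M, M) \;=\; 1 - (1-1/M)^{M-1} \;\geq\; \tfrac{1}{2} \qquad (M \geq 2),
\]
and establishing it is the main (though mild) obstacle. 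I would verify it by noting that $g(M) := (1-1/M)^{M-1}$ satisfies $g(2) = 1/2$ and is non-increasing in $M$: indeed $\tfrac{d}{dM}\log g(M) = \log(1-1/M) + 1/M \leq 0$ by the standard inequality $\log(1-t) \leq -t$ applied at $t = 1/M$.

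With that estimate in hand, the two regimes drop out. If $Mx \leq 1$ (so $x \leq 1/M$), the non-increasing secant slope gives $f(x,M)/x \geq M \cdot f(1/M, M) \geq M/2$, hence $f(x,M) \geq Mx/2 = \tfrac{1}{2}\min(1,Mx)$. If $Mx \geq 1$ (so $x \geq 1/M$), monotonicity of $f$ in $x$ gives $f(x,M) \geq f(1/M, M) \geq 1/2 = \tfrac{1}{2}\min(1,Mx)$. Combined with the upper bound, this completes the proof.
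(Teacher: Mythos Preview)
Your proof is correct. The upper bound via Bernoulli's inequality is the analytic content of the union bound the paper invokes, so there the two arguments coincide. The lower bound, however, is handled differently: the paper simply cites Shulman's lemma on the tightness of the clipped union bound for pairwise independent events, whereas you give a self-contained calculus argument based on the concavity of $x\mapsto f(x,M)$, the monotonicity of $f$, and the threshold estimate $(1-1/M)^{M-1}\le 1/2$ for $M\ge 2$. Your route is more elementary and does not require the reader to look up an external reference; the paper's route is shorter on the page and carries a probabilistic interpretation (the bound applies to any family of pairwise independent error events, not only i.i.d.\ ones). Either way, note that the lemma as stated tacitly needs $M\ge 2$, since for $M=1$ one has $f(x,1)=0$ and the lower bound fails; both proofs implicitly use this.
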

\begin{proof}
The upper bound follows from the union bound. The lower bound follows from a lemma by Shulman \cite[Lemma A.2]{ShulmanPHD} about the tightness of the clipped union bound for events that are pairwise independent.
\end{proof}

With this we have the following:
\begin{theorem} \label{theorem:achivble_bounds} Set $M=e^R$. \mynewline
\begin{equation}\label{PE:Exact}
  \bar{P}_{e,m,W}\BRA{R} = \EQWU{f(p_{e,m,X,Y}, M)}
\end{equation}
\begin{align} \label{PE:Upper}
            \bar{P}_{e,m,W}\BRA{R} &\leq \EQWU{\min\BRA{1,(M-1)\cdot p_{e,m,X,Y}} } \notag \\
                                   &\leq \EQW{ \min\BRA{1,(M-1)\cdot \ExU{ p_{e,m,X,Y} | \bx,\by} }}
\end{align}
\begin{equation}\label{PE:Lower}
  \bar{P}_{e,m,W}\BRA{R} \geq \frac{1}{2}\cdot \EQWU{\min\BRA{1,(M-1)\cdot p_{e,m,X,Y}} }
\end{equation}

  where $\EQWU{\cdot}$ is with respect to $Q(\bx)W(\by|\bx)$ and the uniform random variables ($\mathcal{U}_k$) in the random metric.
\end{theorem}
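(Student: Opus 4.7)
The plan is to condition on the transmitted codeword, the channel output, and the dither attached to the transmitted codeword, compute the conditional error probability exactly as a Bernoulli tail, and then take the outer expectation. Without loss of generality I would assume message $1$ is transmitted, so the randomized metric of the transmitted codeword is $\xi_{1,\by}=p_{e,m,\bx_1,\by}$. Rewriting the decoding rule \eqref{RandomizeDecoder:def1} in terms of $\xi$, property \eqref{RandomMetric:Order} turns ``maximise $m$'' into ``minimise $\xi$'', so a decoding error occurs exactly when some competitor $k\neq 1$ satisfies $\xi_{k,\by}<\xi_{1,\by}$; ties are a null event after dithering.

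First I would condition on $(\bx_1,\by,\mathcal{U}_1)$, which fixes $u:=p_{e,m,\bx_1,\by}\in[0,1]$. Each competing pair $(\bx_k,\mathcal{U}_k)$, $k=2,\dots,M$, is independent of everything else, so by Property (3) of Proposition \ref{prop:propeties_pe} applied with threshold $u$,
\begin{equation*}
\PR{\xi_{k,\by}<u \,\big|\, \bx_1,\by,\mathcal{U}_1} \;=\; u.
\end{equation*}
Mutual independence of the $M-1$ competitors then gives the conditional error probability $1-(1-u)^{M-1}=f(u,M)$, and averaging over $\bx_1\sim Q$, $\by\sim W(\cdot|\bx_1)$, and $\mathcal{U}_1$ uniform on $[0,1]$ yields the exact identity \eqref{PE:Exact}.

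Next I would invoke Lemma \ref{lemma:upper_lower_bound} pointwise on the integrand $f(p_{e,m,\bX,\bY},M)$, using its natural refinement with $M-1$ in place of $M$ (the union bound $1-(1-x)^{M-1}\leq (M-1)x$ for the upper side, Shulman's lemma applied to the $M-1$ pairwise-independent competitor events for the lower side). Monotonicity of the expectation produces the first inequality of \eqref{PE:Upper} and the bound \eqref{PE:Lower}. The second inequality of \eqref{PE:Upper} is Jensen's inequality for the concave function $x\mapsto \min(1,(M-1)x)$ on $[0,1]$: after conditioning on $(\bx,\by)$, pushing the expectation over the dithers inside gives
\begin{equation*}
\ExU{\min\BRA{1,(M-1)\,p_{e,m,\bx,\by}}} \;\leq\; \min\BRA{1,(M-1)\,\ExU{p_{e,m,\bx,\by}\mid \bx,\by}},
\end{equation*}
and taking $\EQW{\cdot}$ on both sides finishes the bound.

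There is no deep obstacle; the argument is essentially bookkeeping. The one step that actually uses the new construction is Property (3): it turns each competitor's pairwise event into an exact $\mathrm{Bernoulli}(u)$ trial, independent across competitors, which is what produces the clean $1-(1-u)^{M-1}$ form in \eqref{PE:Exact}. Without the dithers in \eqref{pairwise:def}, Property (3) would fail at atoms of the metric distribution and the exact identity would have to be replaced by two-sided inequalities, which is precisely what earlier RCU/DT/RCU$^{\ast}$ bounds settle for.
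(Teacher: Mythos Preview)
Your proposal is correct and follows essentially the same route as the paper: the exact identity \eqref{PE:Exact} comes from interpreting $p_{e,m,\bx,\by}$ as the exact pairwise error probability (you justify this via Property~(3) and independence of the competitors, which the paper merely asserts in one line), and \eqref{PE:Upper}--\eqref{PE:Lower} then follow from Lemma~\ref{lemma:upper_lower_bound} together with the concavity of $x\mapsto\min(1,(M-1)x)$. Your write-up is simply a more careful unpacking of the paper's terse argument, including the minor sharpening from $M$ to $M-1$ in the clipped union bound.
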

\begin{proof}
  \eqref{PE:Exact} follows because $p_{e,m,x,y}$ is the exact pairwise error probability given that $\bx$ was transmitted and $\by$ is the received word. \eqref{PE:Upper} and \eqref{PE:Lower} follow from lemma \ref{lemma:upper_lower_bound} and the concavity of $\min\BRA{1,(M-1)\cdot x}$ with respect to $x$.
\end{proof}
\begin{remark}
  The first bound in \eqref{PE:Upper}  is the \textbf{Clipped Union Bound}. It resembles the RCU bound \cite[Theorem 16]{polyanskiy2010channel}, with the twist about the tie breaking. The second bound is given in \cite{haim2013importance}.
\end{remark}

\section{General formula for the Clipped union bound} \label{section:general_formula}

Let $U(\bx,\by)$ be a positive random variable and define:
\begin{equation}\label{def:err_pu}
P_{U}(R) \triangleq \E{\min\BRA{1,e^{R}\cdot U(\bx,\by)}}.
\end{equation}
The random coding clipped union bound is just $P_{U}(R)$ with $U(\bx,\by)=p_{e,m,\bx,\by}$ and $M-1=e^R$.
To avoid cumbersome notation we will abbreviate and denote $P(R)$ instead of $P_{U}(R)$ and the R.V. $U$ will be held fixed.
The following lemma provides a formula for $P(R)$ in terms of the CDF of $-\log(U)$ and is useful for our purposes.

\begin{lemma}\label{lem:main_res}
Let $F(z)$ be the CDF of $-\log\BRA{U(\bx,\by)}$, \ie, $F(z) = \PR{-\log\BRA{U(\bx,\by)} \leq z}$. The probability is taken with respect to the expectation measure, \ie,\ $\PR{\mathcal{A}} = \E{\Ind{\mathcal{A}}}$. The following formula holds:
\begin{equation}\label{formulas:probability_error}
  P(R) = e^{R} \cdot \int_{R}^{\infty} F(z)\cdot e^{-z}dz
\end{equation}
If $F(z)$ is continuous at $z=R$, then the following also holds:
\begin{equation}\label{formulas:dev_pe}
\frac{\partial P(R)}{\partial R} = P(R)-F(R)
\end{equation}
\begin{equation}\label{formulas:dev_ee}
  \frac{\partial}{\partial R}\BRA{-\log\BRA{P(R)}} = \frac{F(R)}{P(R)}-1.
\end{equation}
\end{lemma}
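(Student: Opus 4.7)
My plan is to change variables to $Z = -\log U(\bx,\by)$ so that $U = e^{-Z}$ and $Z$ has CDF $F$. Under this substitution the integrand inside the expectation splits cleanly:
\begin{equation*}
\min\BRA{1, e^{R} U} \;=\; \min\BRA{1, e^{R-Z}} \;=\; \Ind{Z \le R} + e^{R-Z}\Ind{Z > R}.
\end{equation*}
Taking expectations (equivalently, integrating against the Lebesgue–Stieltjes measure $dF$) gives
\begin{equation*}
P(R) \;=\; F(R) \;+\; e^{R}\int_{R}^{\infty} e^{-z}\, dF(z).
\end{equation*}

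Next I would apply integration by parts (in the Riemann–Stieltjes sense) to the remaining integral. Using $F(z) \to 1$ as $z\to\infty$ and $e^{-z} \to 0$, the boundary terms yield
\begin{equation*}
\int_{R}^{\infty} e^{-z}\, dF(z) \;=\; -e^{-R} F(R) \;+\; \int_{R}^{\infty} F(z)\, e^{-z}\, dz,
\end{equation*}
so multiplying by $e^{R}$ and plugging back into the expression for $P(R)$ cancels the $F(R)$ term and delivers \eqref{formulas:probability_error}.

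For \eqref{formulas:dev_pe}, I would differentiate the integral representation directly via Leibniz's rule: the $e^R$ prefactor contributes $e^R \int_R^\infty F(z) e^{-z} dz = P(R)$, while differentiation of the lower limit contributes $-e^R \cdot F(R) e^{-R} = -F(R)$. Continuity of $F$ at $z=R$ is exactly what is needed so that the integrand $F(z)e^{-z}$ has a well-defined value at the moving endpoint and the Leibniz rule applies without a jump correction. Finally, \eqref{formulas:dev_ee} follows by the chain rule: $\frac{\partial}{\partial R}(-\log P(R)) = -P'(R)/P(R) = F(R)/P(R) - 1$.

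The only delicate step is the integration by parts and the boundary behaviour at infinity; once $F$ is recognized as a proper CDF and $e^{-z}F(z) \to 0$ as $z \to \infty$, the rest is mechanical. The continuity hypothesis on $F$ at $R$ is used only in the differentiation step (the first identity \eqref{formulas:probability_error} holds for every $R$, with or without atoms of $F$ at $R$).
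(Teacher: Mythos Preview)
Your proof is correct, but your derivation of \eqref{formulas:probability_error} takes a different route from the paper's. The paper starts from the tail--integral (layer--cake) identity $\E{X}=\int_0^\infty \PR{X\ge t}\,dt$ applied to $X=\min(1,e^R U)$, observes that the integrand vanishes for $t>1$, rewrites $\PR{e^R U\ge t}=F(R-\log t)$, and then substitutes $z=R-\log t$ to obtain \eqref{formulas:probability_error} directly. You instead split $\min(1,e^{R-Z})$ into its two regimes, get $P(R)=F(R)+e^R\int_R^\infty e^{-z}\,dF(z)$, and then integrate by parts to convert $dF$ into $F(z)e^{-z}\,dz$. The paper's route avoids the Riemann--Stieltjes integration by parts entirely (hence no need to worry about atoms of $F$ at the lower limit or the boundary bookkeeping you flagged), at the small cost of invoking the tail formula. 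Your route has the advantage that the intermediate expression $P(R)=F(R)+\,(\text{tail})$ already makes the derivative formula \eqref{formulas:dev_pe} almost visible before any further manipulation. For \eqref{formulas:dev_pe} and \eqref{formulas:dev_ee} the two arguments coincide: Leibniz on the lower limit plus the chain rule.
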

\begin{proof}
The proof follows by using the expectation integral formula and changing the variables, and is given in the appendix.
\end{proof}
Define $E_r(R) = -\log\BRA{P(R)}$ as the random coding error exponent (non-asymptotic definition!). Combining theorem \ref{theorem:achivble_bounds} and lemma \ref{lem:main_res} we have a tight bound (up to factor 2) on the achievable rates by random coding at any rate. We summarize the performance of the random coding in the following:
\begin{theorem} \label{theorems:achive} \mynewline
\begin{equation}\label{RandomCoding:Bounds1}
  F(R) \leq P(R) = \frac{F(R)}{1+E_r'(R)}
\end{equation}
\begin{equation}\label{ErrorExponent:Bounds1}
    -1 \leq E_r(R) \leq 0.
\end{equation}
\end{theorem}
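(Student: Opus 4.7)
Both displays follow from Lemma~\ref{lem:main_res} by simple algebra and monotonicity of a CDF; no new estimates are needed beyond what has already been established.

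First I would establish the equality $P(R) = F(R)/\BRA{1+E_r'(R)}$. By definition $E_r(R) = -\log P(R)$, and equation~\eqref{formulas:dev_ee} states that at every $R$ where $F$ is continuous, $E_r'(R) = F(R)/P(R) - 1$. Rearranging gives $1 + E_r'(R) = F(R)/P(R)$, hence $P(R) = F(R)/\BRA{1+E_r'(R)}$, which is exactly the asserted equality in \eqref{RandomCoding:Bounds1}.

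Next I would prove the inequality $F(R) \leq P(R)$ using the integral representation \eqref{formulas:probability_error}. Since $F$ is a CDF, it is monotone non-decreasing, so $F(z) \geq F(R)$ for all $z \geq R$. Therefore
\[
P(R) \;=\; e^R \int_R^\infty F(z)\, e^{-z}\, dz \;\geq\; F(R)\cdot e^R \int_R^\infty e^{-z}\, dz \;=\; F(R).
\]
Combined with $F(R) \geq 0$ this yields $0 \leq F(R)/P(R) \leq 1$. Subtracting $1$ and invoking the identity of the previous paragraph gives $-1 \leq E_r'(R) \leq 0$, which is the content of \eqref{ErrorExponent:Bounds1} (the statement as written appears to omit the prime on $E_r$, since $E_r(R) = -\log P(R) \geq 0$ for any probability $P(R)\in[0,1]$, so the intended bound must be on the derivative).

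The only technical subtlety — and the main thing I would want to handle carefully — is that \eqref{formulas:dev_ee} requires $F$ to be continuous at the point $R$. Since $F$ is a CDF it has at most countably many discontinuities, so the identity and the derivative bounds hold for almost every $R$. At a jump point of $F$ one can either appeal to one-sided derivatives obtained by applying the same monotonicity argument directly to the integral form \eqref{formulas:probability_error}, or simply state the inequality $F(R)\leq P(R)$ (which needs no continuity assumption) and interpret the derivative identity in the Lebesgue-a.e.\ sense. This is the only possible obstacle, and it is routine.
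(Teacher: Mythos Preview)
Your proposal is correct and follows essentially the same argument as the paper: use the integral representation \eqref{formulas:probability_error} together with the monotonicity of $F$ to obtain $F(R)\le P(R)$, deduce $0\le F(R)/P(R)\le 1$, and rearrange \eqref{formulas:dev_ee} to get $P(R)=F(R)/(1+E_r'(R))$ and the derivative bounds. You also correctly spotted that \eqref{ErrorExponent:Bounds1} is a typo for $E_r'(R)$; the paper's own proof makes this clear, and your remark on the countably many discontinuity points of $F$ is a reasonable addition that the paper leaves implicit.
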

\begin{proof}
The proof also appears in the appendix.
\end{proof}

\section{General Converse} \label{section:general_converse}

From \eqref{RandomCoding:Bounds1}, $F(R)$ is a lower bound on the error probability of the clipped union bound for random codes. Taking the $\inf$ of $F(R)$ with respect to the prior $Q(\bx)$ gives a lower bound on the random coding error probability for all priors. In this section we prove that $F(R)$ is indeed a lower bound on the error probability of any fixed (non-random) code (up optimization on the prior). The result is similar to \cite[Lemma 1]{somekh2013general} but is given in terms of $F(R)$. Since we handle ties, the converse is on the performance of decoder where ties are broken arbitrary and not count automatically as error, as in \cite[Lemma 1]{somekh2013general}. For the matched case we prove that the bound matches the meta-converse of Polyanskiy \cite[Theorem 27]{polyanskiy2010channel}.

\subsection{Mismatched case}

\begin{theorem}[General mismatched converse] For any code with $M$ equiprobable codewords:
\begin{equation}\label{Converse:RD}
  \varepsilon  = F(R), 
\end{equation}
 where $Q(\bx)$ is distributed uniformly over the $M$ codewords and ties are broken arbitrarily. In particular
\begin{equation}\label{Converse:RD:Lower}
  \varepsilon  \geq \inf_{Q(\bx)} F(R),
\end{equation}
\end{theorem}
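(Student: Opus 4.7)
The plan is to use the specified prior $Q(\bx)$ uniform on $\{\bx_1,\ldots,\bx_M\}$, reinterpret the max-metric decoder with random tie-breaking via the randomized quantities $\xi_{i,\by}=p_{e,m,\bx_i,\by}$ of Section \ref{section:analysis}, and establish the pointwise identity $\PR{\mathrm{error}\mid \bx_k,\by}=\PR{\xi_{k,\by}\geq 1/M \mid \bx_k,\by}$. Averaging both sides against $(\bX,\bY)\sim Q(\bx)W(\by|\bx)$ (together with the auxiliary $\mathcal{U}_i$) will then convert $\varepsilon$ into $\PR{-\log p_{e,m,\bX,\bY}\leq R}=F(R)$, and \eqref{Converse:RD:Lower} will follow since the uniform-on-codewords $Q(\bx)$ is one admissible choice in the infimum.

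For the pointwise identity, fix $\bx_k$ and $\by$ and set $a_k=\#\{i:m(\bx_i,\by)>m(\bx_k,\by)\}$ and $b_k=\#\{i:m(\bx_i,\by)=m(\bx_k,\by)\}$. By construction $\xi_{k,\by}$ is uniform on $[a_k/M,(a_k+b_k)/M]$, and the intervals attached to distinct metric classes are pairwise disjoint, so $\argmin_i \xi_{i,\by}$ almost surely lies in the top metric class. I would split into two cases. If $a_k\geq 1$, then $\bx_k$ lies outside the top class, so $\xi_{k,\by}$ exceeds every top-class $\xi$ almost surely and the error occurs certainly; simultaneously $\xi_{k,\by}\geq a_k/M\geq 1/M$ certainly, so both sides of the identity equal $1$. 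If $a_k=0$, then $k$ is among $b_k$ tied codewords whose $\xi$-values are iid uniform on the common interval $[0,b_k/M]$ (all tied codewords share $a_i=0$ and $b_i=b_k$, and the $\mathcal{U}_i$ are independent); by symmetry $\xi_{k,\by}$ is the minimum with probability $1/b_k$, so the error probability is $1-1/b_k$, while $\PR{\xi_{k,\by}\geq 1/M}=1-1/b_k$ as well since $\xi_{k,\by}$ is uniform on an interval of length $b_k/M$.

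The main obstacle is the symmetry step in the second case: one must carefully justify that the $b_k$ top-class $\xi$-values are iid uniform on the common interval $[0,b_k/M]$, which uses both the independence of the $\mathcal{U}_i$ and the fact that tied codewords share identical $a_i$ and $b_i$. Everything else is routine bookkeeping: the pointwise identity, averaged against $Q(\bx)W(\by|\bx)$, yields $\varepsilon=\PR{p_{e,m,\bX,\bY}\geq e^{-R}}=F(R)$, which is \eqref{Converse:RD}; and \eqref{Converse:RD:Lower} follows immediately from $\inf_{Q(\bx)}F(R)\leq F(R)=\varepsilon$ for this particular choice of $Q(\bx)$.
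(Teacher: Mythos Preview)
Your proposal is correct and follows essentially the same route as the paper: both condition on $(\bx_k,\by)$, split into the cases $Q(>)>0$ (your $a_k\ge1$) and $Q(>)=0,\ Q(=)=i/M$ (your $a_k=0,\ b_k=i$), and verify in each case that the conditional error probability equals $\PR{p_{e,m,\bx_k,\by}\ge e^{-R}\mid\bx_k,\by}$, then average over $Q(\bx)W(\by|\bx)$. Your explicit counting notation and use of the $\xi$-decoder make the tie-breaking case cleaner than the paper's terser chain (which in fact writes $1/i$ where $(i-1)/i$ is intended), but the argument is the same.
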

\begin{proof}

$Q(>) = Q(m(\bX,\by) > m(\bx,\by))$, $Q(=) = Q(m(\bX,\by) = m(\bx,\by))$

\begin{align*}
  \PR{E} &\overset{(a)}{=} \PR{E, Q(>) > 0} \\
         &+                \sum_{i=1}^{e^R} \PR{E, Q(>) = 0, Q(=) = i\cdot e^{-R} } \\
         &\overset{(b)}{=} \PR{Q(>) > 0} \cdot \PR{E| Q(>) > 0 } \\
         &+                \sum_{i=1}^{e^R} \PR{Q(>) = 0, Q(=) = i\cdot e^{-R} } \cdot \PR{E | Q(>) = 0, Q(=) = i\cdot e^{-R} }\\
         &\overset{(c)}{=} \PR{Q(>) \geq e^{-R}} \\
         &+                \sum_{i=2}^{e^R} \PR{Q(>) = 0, Q(=) = i\cdot e^{-R} } \cdot \frac{1}{i}\\
         &\overset{(c)}{=} \PR{Q(>) \geq e^{-R}} \cdot \PR{p_{e,m,\bx,\by} \geq e^{-R} | Q(>) \geq e^{-R}} \\
         &+                \sum_{i=1}^{e^R} \PR{Q(>) = 0, Q(=) = i\cdot e^{-R} } \cdot \PR{p_{e,m,\bx,\by} \geq e^{-R} | Q(>)=0, Q(=)=i\cdot e^{-R}} \\ 
         &= \PR{p_{e,m,\bx,\by} \geq e^{-R} } = F(R)
\end{align*}

\end{proof}

\subsection{Matched case}
In \cite{ElkayamEE2014} we evaluate the exponent of $F(R)$ for DMCs and got as a result the sphere packing bound. Polyanskiy \cite{polyanskiy2013saddle} evaluated the meta-converse and showed that asymptotically it also equals the sphere packing bound. A straight forward calculation of $F(R)$ for BSC and BEC (with uniform prior) gives the minimax converse lower bounds calculated in \cite{polyanskiy2013saddle}. In this section we will show that these bounds are indeed equivalent. Moreover, $F(R)$ will be equal to the power of the binary hypothesis test after optimization on $Q(\by)$ for any $Q(\bx)$.

The main result of this section is:
\begin{theorem}
\begin{equation}\label{FR:Equivalence}
  F(R) = \sup_{Q(y)} \beta_{1-e^{-R}}\BRA{Q(\bx)Q(\by), Q(\bx)W(\by|\bx)}.
\end{equation}
\end{theorem}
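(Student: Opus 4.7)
The plan is to prove the equality by establishing both inequalities, with the central tool being the uniformity statement in Proposition~\ref{prop:propeties_pe}(3): for every fixed $\by$, $p_{e,m,\bX,\by}$ is uniform on $[0,1]$ under $\bX \sim Q$ and $\mathcal{U}$.

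For the direction $\sup_{Q(\by)}\beta_{1-e^{-R}}\BRA{Q(\bx)Q(\by),Q(\bx)W(\by|\bx)} \leq F(R)$, I would, for any candidate $Q(\by)$, exhibit the randomized test
\[
  P_{Z|(\bx,\by)}(1) = \PR{p_{e,m,\bx,\by} \geq e^{-R}},
\]
where the randomization is provided by the $\mathcal{U}$ variable already present in the definition of $p_{e,m,\bx,\by}$. Under $P_1 = Q(\bx)W(\by|\bx)$ the test has power $F(R)$ by the very definition of $F$. Under $P_0 = Q(\bx)Q(\by)$, Proposition~\ref{prop:propeties_pe}(3) gives $\PR{p_{e,m,\bx,\by} \geq e^{-R}} = 1-e^{-R}$ for every $\by$, and averaging over $\by \sim Q(\by)$ preserves that value. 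Thus the test meets the size constraint \eqref{binary_hypothsis:beta} with equality and attains objective $F(R)$, so $\beta_{1-e^{-R}} \leq F(R)$ for every $Q(\by)$.

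For the reverse inequality, I would construct a specific $Q^*(\by)$ that forces the Neyman--Pearson likelihood-ratio test to coincide with the test above. For each $\by$, define the critical likelihood $t_\by$ by the quantile condition
\[
  Q\BRAs{W(\by|\bX) > t_\by} \leq e^{-R} \leq Q\BRAs{W(\by|\bX) \geq t_\by},
\]
and set $Q^*(\by) \propto t_\by$ after normalization. The LLR test $Q^*(\by)/W(\by|\bx) > e^\lambda$ then reduces, upon choosing $\lambda$ as the log of the normalization constant, to $W(\by|\bx) < t_\by$; equipping the boundary $W(\by|\bx)=t_\by$ with the randomization induced by $\mathcal{U}$ in $p_{e,m,\bx,\by}$ makes this test identical to the one used in the first half. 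By Neyman--Pearson optimality, $\beta_{1-e^{-R}}$ for this $Q^*$ equals the $P_1$-power of this test, which is $F(R)$.

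The main obstacle will be verifying the boundary-randomization alignment: the Neyman--Pearson test requires a precise randomization probability at $W(\by|\bx)=t_\by$ to saturate size $1-e^{-R}$, and I need to check that the randomization built into $p_{e,m,\bx,\by}$ via $\mathcal{U}$ matches it exactly --- a direct computation of $\PR{p_{e,m,\bx,\by}\geq e^{-R}}$ conditioned on $W(\by|\bx)=t_\by$ should show equality with $(Q\BRAs{W(\by|\bX)\geq t_\by} - e^{-R})/Q\BRAs{W(\by|\bX)=t_\by}$, which is precisely the NP boundary weight. A secondary concern is integrability of $\sum_\by t_\by$ when $\cY$ is infinite, in which case the construction is approximated via truncations of $\cY$ and passed to the supremum in the limit.
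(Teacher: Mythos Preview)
Your proposal is correct and follows essentially the same route as the paper's proof: both directions use the test $\Ind{p_{e,m,\bx,\by}\ge e^{-R}}$, and for the achieving $Q^*(\by)$ you set $Q^*(\by)\propto t_\by$, which is exactly the paper's choice $Q(\by)\propto W(\by|\bx_y)$ since $t_\by=W(\by|\bx_y)$ in the matched case. Your anticipated boundary computation $1-\tau_y=(Q\{W(\by|\bX)\ge t_\by\}-e^{-R})/Q\{W(\by|\bX)=t_\by\}$ is precisely what the paper verifies, and your remark on summability of $t_\by$ for infinite $\cY$ is a legitimate technical point the paper leaves implicit.
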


The proof outline is as follows:

\begin{enumerate}
  \item Let $A$ be the event $\BRAs{-\log(p_{e,m,x,y}) \leq R}$. Consider this event as a suboptimal test between $Q(\bx)Q(\by)$ and $Q(\bx)W(\by|\bx)$, \ie,\ $\delta(\bx,\by) = \Ind{-\log(p_{e,m,x,y}) \leq R} = \Ind{A}$
  \item When evaluated under $Q(\bx)Q(\by)$ for any $Q(\by)$ we will have $\PRs{Q(\bx)Q(\by),\mathcal{U}}{A} = 1-e^{-R}$ because of \eqref{RandomMetric:Uniform}. Thus $F(R) = \PRs{Q(\bx)W(\by|\bx),\mathcal{U}}{A} \geq \beta_{1-e^{-R}}\BRA{Q(\bx)Q(\by), Q(\bx)W(\by|\bx)}$ for all $Q(\by)$. This inequality also holds when we take the supremum over $Q(\by)$.
  \item For the reverse inequality, we show that there exists $Q(\by)$ that achieves $F(R)$. The reason is that the order between the $\bx's$ given $\by$ for the optimal test is exactly given by the order of the optimal metric $\log\BRA{W(\by|\bx)}$, which matches the order induced by $-\log(p_{e,m,x,y})$, and $Q(\by)$ is used to tune the threshold between all the different $\by's$.
\end{enumerate}

\ifFullProofs

\begin{proof}
Fix $Q(\by)$ and let $\delta(\bx,\by) = \Ind{-\log(p_{e,m,x,y}) \leq R}$ define a test between $Q(\bx)Q(\by)$ and $Q(\bx)W(\by|\bx)$. Compute the significance level associated with this test:
\begin{align*}
  \alpha    &= \PRs{Q(\bx)Q(\by),\mathcal{U}}{\delta(\bx,\by) = 1} \\
            &= \PRs{Q(\bx)Q(\by),\mathcal{U}}{-\log(p_{e,m,x,y}) \leq R)} \\
            &= \sum_y Q(\by) \cdot \PRs{Q(\bx),\mathcal{U}}{-\log(p_{e,m,x,y}) \leq R)} \\
            &= \sum_y Q(\by) \cdot \PRs{Q(\bx),\mathcal{U}}{p_{e,m,x,y} \geq 2^{-R})}   \\
            &\overset{(a)}{=} \sum_y Q(\by) \BRA{  1-2^{-R} } =  1-2^{-R},
\end{align*}
where (a) follows from \eqref{RandomMetric:Uniform}. This proves that the significance level of the test is $e^{-R}$ and:
\begin{align*}
   &\beta_{1-e^{-R}}\BRA{Q(\bx)Q(\by), Q(\bx)W(\by|\bx)} \\
   &\leq  \PRs{Q(\bx)W(\by|\bx),\mathcal{U}}{\delta(\bx,\by) = 1} \\
                                                        &= F(R).
\end{align*}
Taking the $\sup$ over $Q(\by)$ we have:
\begin{equation}\label{meta_eq_leq}
  \sup_{Q(\by)} \beta_{1-e^{-R}}\BRA{Q(\bx)Q(\by), Q(\bx)W(\by|\bx)} \leq F(R).
\end{equation}

To show the reverse inequality, it is sufficient to find $Q(\by)$ such that $\beta_{1-e^{-R}}\BRA{Q(\bx)Q(\by), Q(\bx)W(\by|\bx)} = F(R)$. Recall that we assumed the matched case, \ie\, $m(\bx,\by) = \log\BRA{W(\by|\bx)}$. For any $\by$ there exists $\bx_y$ such that: $Q(m(\bX,\by) > m(\bx_y,\by)) \leq 2^{-R} \leq Q(m(\bX,\by) \geq m(\bx_y,\by))$. Define $\tau_y \in [0,1]$ such that:
\begin{equation}\label{def:Threshold}
  2^{-R} = Q(m(\bX,\by) > m(\bx_y,\by)) + \tau_y \cdot Q(m(\bX,\by) = m(\bx_y,\by)).
\end{equation}

\begin{align*}
  F(R) &= \PRs{Q(\bx)W(\by|\bx),\mathcal{U}}{-\log(p_{e,m,x,y}) \leq R)} \\
       &= \PRs{Q(\bx)W(\by|\bx),\mathcal{U}}{p_{e,m,x,y} \geq 2^{-R})}   \\
       &= \sum_{\bx,\by} Q(\bx)W(\by|\bx)\PR{ p_{e,m,x,y} \geq 2^{-R} | \bx,\by}.
\end{align*}

We have 3 cases:
\begin{enumerate}
  \item When $m(\bx,\by) > m(\bx_y,\by)$ we have $p_{e,m,x,y} \leq Q\BRA{m(\bX,\by) \geq m(\bx,\by) } \leq Q\BRA{m(\bX,\by) > m(\bx_y,\by) } < 2^{-R}$. It follows that $\PR{ p_{e,m,x,y} \geq 2^{-R} | \bx,\by} = 0$ in this case. 
  \item When $m(\bx,\by) < m(\bx_y,\by)$ we have $p_{e,m,x,y} \geq Q\BRA{m(\bX,\by) > m(\bx,\by) } \geq Q\BRA{m(\bX,\by) \geq m(\bx_y,\by) } \geq 2^{-R}$. It follows that $\PR{ p_{e,m,x,y} \geq 2^{-R} | \bx,\by} = 1$ in this case.
  \item When $m(\bx,\by) = m(\bx_y,\by)$ we have $p_{e,m,x,y} = Q(m(\bX,\by) > m(\bx,\by)) + \mathcal{U} \cdot Q(m(\bX,\by) = m(\bx,\by))$. Since $Q(m(\bX,\by) > m(\bx,\by)) = Q(m(\bX,\by) > m(\bx_y,\by))$ and $Q(m(\bX,\by) = m(\bx,\by)) = Q(m(\bX,\by) = m(\bx_y,\by))$ we have:
      \begin{align*}
        &\PR{p_{e,m,x,y} \geq 2^{-R} |\bx,\by} \\
        &= \PR{p_{e,m,x_y,y} \geq 2^{-R} |\bx_y,\by} \\
                                 &= \PR{ Q(m(\bX,\by) > m(\bx_y,\by)) + \mathcal{U} \cdot Q(m(\bX,\by) = m(\bx_y,\by)) \geq 2^{-R} | \bx_y, \by}
      \end{align*}
      Plugging in \eqref{def:Threshold} and canceling out we have:
      \begin{align*}
        \PR{p_{e,m,x,y}|\bx,\by} &= \PR{\mathcal{U} \geq \tau_y } \\
                                 &= \BRA{1-\tau_y}.
      \end{align*}
\end{enumerate}
Combining all we get:

\begin{align*}
  F(R) &= \sum_{\bx,\by} Q(\bx)W(\by|\bx)\PR{ p_{e,m,x,y} \geq 2^{-R} | \bx,\by} \\
       &= \sum_{\bx,\by: m(\bx,\by) < m(\bx_y,\by) } Q(\bx)W(\by|\bx) + \sum_{\bx,\by: m(\bx,\by) = m(\bx_y,\by) } Q(\bx)W(\by|\bx)\BRA{1-\tau_y} \\
       &= \PRs{Q(\bx)W(\by|\bx)}{m(\bx,\by) < m(\bx_y,\by) } +\PRs{Q(\bx)W(\by|\bx)}{m(\bx,\by) = m(\bx_y,\by), \mathcal{U} \geq \tau_y }.
\end{align*}

In order to write $F(R)$ as the power of the binary hypothesis test, we want to have: $m(\bx_y,\by) = \log(W(\by|\bx_y)) = \log(Q(\by))+\eta$, which gives: $\eta = \sum_{y'} W(\by'|\bx_{y'})$ and $Q(\by) = \frac{W(\by|\bx_y)}{\sum_{y'} W(\by'|\bx_{y'})} = \eta^{-1}\cdot W(\by|\bx_y)$, and

\begin{align*}
  F(R) &= \PRs{Q(\bx)W(\by|\bx)}{m(\bx,\by) < m(\bx_y,\by) } +\PRs{Q(\bx)W(\by|\bx)}{m(\bx,\by) = m(\bx_y,\by), \mathcal{U} \geq \tau_y } \\
       &= \PRs{Q(\bx)W(\by|\bx)}{\log \BRA{W(\bx,\by)} < \log(Q(\by))+\eta } +\PRs{Q(\bx)W(\by|\bx)}{\log \BRA{W(\bx,\by)} = \log(Q(\by))+\eta, \mathcal{U} \geq \tau_y } \\
       &= \PRs{Q(\bx)W(\by|\bx)}{\log \BRA{\frac{W(\bx,\by)}{Q(\bx)}} < \eta } +\PRs{Q(\bx)W(\by|\bx)}{\log \BRA{\frac{W(\bx,\by)}{Q(\bx)}} = \eta, \mathcal{U} \geq \tau_y }.
\end{align*}
So indeed we see that $F(R)$ is the power of the Neyman-Pearson lemma for some $Q(\by)$ at significance level $e^{-R}$.
\end{proof}

\else
  \begin{proof}
  The proof appears in \cite{ElkayamFiniteSite} and is omitted here due to space limitation.
  \end{proof}

\fi

\section{Discussion and Further research} \label{section:discussion}

We conclude with several observations regarding the results we have obtained. First, we have shown the power of the random coding argument to achieve performance of the best code at a given rate up to a multiplicative factor that is given. The fundamental limit of the random coding argument is that the derivative of the random coding error exponent is bounded by -1, which prevents using this argument for rates where the ideal error exponent derivative is smaller than -1. Luckily, in general, the error exponent derivative for large enough rates is larger than -1.

In the general case, \ie,\ the mismatched decoding setting, the term $F(R)$ may be regarded as an ``operational'' quantity. However, the equivalence to the ``informational'' quantity \eqref{FR:Equivalence} for the matched case raises hope for an equivalent ``informational'' quantity in the mismatched case along the lines of \cite[Eq. 97-99]{somekh2013general}. We have several conjectures and this is still under investigation. Nevertheless, in the matched case, convexity and continuity properties are more easily derived using the binary hypothesis testing perspective; see \cite{polyanskiy2013saddle}. The validity of these properties to the mismatched metric are still under investigation.

Another cardinal difference between the matched and mismatched cases is the prior optimization. In the matched case, we have explicitly found the optimal output distribution $Q(\by)$ that achieves the $\sup$ in \eqref{FR:Equivalence} and the convexity of $F(R)$ as a function of the prior $Q(\bx)$, implies that the Karush $-$ Kuhn $-$ Tucker (KKT) conditions for the optimal prior should be computed. Moreover, channel symmetries (group $G$ act on the channel input and output, see \cite[Definition 3]{polyanskiy2013saddle}) can be used to assume uniform prior over the $G$-orbits of $Q(\bx)$, \ie, the prior is uniform on types. In the mismatched case, this should not be true anymore, as it is known that for the mismatch case the i.i.d. prior does not achieve the mismatch capacity.

Another interesting point is that the converse bound is given in terms of spheres in the $\bx$ domain. The intuitive sphere packing argument is used when packing the spheres in the $\by$ domain. The interpretation of $F(R)$ as a list decoding performance should also be further investigated.

In this paper we concentrated on the general bounds. Carrying out the calculation of $F(R)$ as a function of the block length, \ie\, $F_n(R)$, in the exponent (Large deviation), moderate deviation, and second order domains should provide the error probability performance of the coding problem in these regimes. Specifically, we should evaluate the convergence and convergence rates of $F_n(R)$. Even when $F_n(R)$ is known to converge (in some sense), the convergence of the derivative of the error exponent is not immediate.

\appendices

\section{Proofs} \label{App:AppendixA}

\subsection{Proof of lemma \ref{lem:main_res}}

We begin by evaluating the expectation using the integral formula \cite[Lemma 5.5.1]{koga2002information}:
\begin{align*}
  P(R)  &= \E{\min(1,e^{R} \cdot U(\bx,\by))} \\
        &= \int_0^{\infty} \PR{ \min(1,e^{R} \cdot U(\bx,\by)) \geq t}dt \\
        &\overset{(a)}{=} \int_0^{1} \PR{ e^{R} \cdot U(\bx,\by) \geq t}dt \\
        &= \int_0^{1} \PR{ -\log(U(\bx,\by)) \leq R-\log(t)}dt \\
        &= \int_0^{1} F(R-\log(t))dt
\end{align*}
where (a) follow because $\PR{ \min(1,e^{R} \cdot U(\bx,\by)) \geq t} = 0$ for $t > 1$ and $\PR{ \min(1,e^{R} \cdot U(\bx,\by)) \geq t} = \PR{ e^{R} \cdot U(\bx,\by) \geq t}$ for $t \leq 1$. Recall that we defined $F(z)=\PR{-\log(U(\bx,\by)) < z}$. Substituting $z=R-\log(t)$.
\begin{itemize}
  \item $e^z=\frac{e^{R}}{t}$, $t=e^{R-z}$
  \item $dz=-\frac{dt}{t}$, $dt=-t \cdot dz =-e^{R-z}\cdot dz$
  \item $t=0 \rightarrow z=\infty, t=1 \rightarrow z=R$
\end{itemize}
\begin{align*}
  P(R)  &= \E{\min(1,e^{R} \cdot U(\bx,\by))} \\
        &= \int_0^{1} F(R-\log(t))dt \\
        &= -\int_{\infty}^{R} F(z)\cdot e^{R-z}dz \\
        &= e^{R} \cdot \int_{R}^{\infty} F(z)\cdot e^{-z}dz \\
\end{align*}

which is \eqref{formulas:probability_error}.


\eqref{formulas:dev_pe} follow by differentiation under the integral sign \cite{flanders1973differentiation}
\footnote{ $\frac{d}{dx}\BRA{\int_{a(x)}^{b(x)} f(x,t)dt} = f(x,b(x))b'(x) - f(x,a(x))a'(x) + \int_{a(x)}^{b(x)} f_x(x,t)dt $}
of \eqref{formulas:probability_error}
\begin{align*}
  \frac{\partial P(R)}{\partial R} &= e^{R} \cdot \int_{R}^{\infty} F(z)\cdot e^{-z}dz \\
  &+ e^{R} \cdot \frac{\partial}{\partial R} \BRA{\int_{R}^{\infty} F(z)\cdot e^{-z}dz} \\
  &= e^{R} \cdot \int_{R}^{\infty} F(z)\cdot e^{-z}dz - e^{R} \cdot F(R) e^{-R} \\
  &= P(R) - F(R).
\end{align*}

And finally \eqref{formulas:dev_ee}
\begin{align*}
  &\frac{\partial}{\partial R}\BRA{- \log(P(R))} \\
  &= -\frac{1}{P(R)}\cdot\frac{\partial P(R)}{\partial R} \\
  &= -\frac{1}{P(R)}\cdot \BRA{P(R)-F(R)} \\
  &= \frac{F(R)}{P(R)}-1.
\end{align*}

\subsection{Proof of theorem \ref{theorems:achive}}

It follows from \eqref{formulas:probability_error} by using the monotony of the CDF function that:
\begin{align*}
  P(R)  &=    e^{R} \cdot \int_{R}^{\infty} F(z)\cdot e^{-z}dz \\
        &\geq e^{R} \cdot \int_{R}^{\infty} F(R)\cdot e^{-z}dz \\
        &=    e^{R} \cdot F(R)\cdot \int_{R}^{\infty} e^{-z}dz \\
        &= F(R).
\end{align*}
This proves that $F(R) \leq P(R)$. Since both $P(R), F(R)$ are positive we have $0 \leq \frac{F(R)}{P(R)} \leq 1$, which gives $-1 \leq E_r(R) \leq 0$.
Notice that $E_r(R) = -\log\BRA{P(R)}$ is the random coding error exponent and we can rewrite \eqref{formulas:dev_ee} as:
\begin{equation}\label{formulas:CDF_factor}
  P(R) = \frac{F(R)}{1+E_r'(R)}.
\end{equation}

\section{A ``sphere packing'' proof of the meta-converse} \label{App:AppendixB}

In this appendix we sketch a proof of the meta-converse based on a sphere packing argument. Let $Q(\by)$ a given distribution on $\cY$. For simplicity we assume that the decoder is deterministic. Later on we explain how this assumption can be removed. For each codeword $\bx_m$, $m=1..M$ let $P(\bx|m)$ denote the encoder channel input distribution given the $m-th$ message is sent. let $D_m$ denote the decoding region associated with the $m-th$ codeword and $\varepsilon_m = W(D_m^c|x_m)$ is the error probability associated with the $m-th$ codeword. Denote $1-\lambda_m = Q(D_m^c)$. Notice that $\sum_{m=1}^M \lambda_m = 1$ as the decoding regions cover the whole output space. Now:
\begin{align*}
  \varepsilon_m &= W(D_m^c|m) \\
                &= \sum_{\bx} W(D_m^c|\bx)P(\bx|m) \\
                &\overset{(a)}{\geq} \beta_{1-\lambda_m}\BRA{Q(\by)P(\bx|m); W(\by|\bx)P(\bx|m)}
\end{align*}
and:
\begin{align*}
  \varepsilon &= \frac{1}{M} \sum_{m=1}^M \varepsilon_m \\
              &\geq \frac{1}{M} \sum_{m=1}^M \beta_{1-\lambda_m}\BRA{Q(\by)P(\bx|m); W(\by|\bx)P(\bx|m)} \\
              &\overset{(b)}{\geq} \sum_{m=1}^M \beta_{1-\frac{1}{M} \sum_{m=1}^M \lambda_m}\BRA{\frac{1}{M} \sum_{m=1}^M Q(\by)P(\bx|m); \frac{1}{M} \sum_{m=1}^M W(\by|\bx)P(\bx|m)} \\
              &\geq \sum_{m=1}^M \beta_{1-\frac{1}{M}}\BRA{Q(\by)P(\bx); W(\by|\bx)P(\bx)}
\end{align*}
where $P(\bx) = \frac{1}{M} \sum_{m=1}^M P(\bx|m)$. (a) by definition of $\beta_{\alpha}$ and the non-optimality of the test $D_m^c$, (b) is the convexity of $(\alpha, P(\bx)) \rightarrow \beta_{\alpha}(P(\by|\bx)P(\bx), Q(\by|\bx)P(\bx))$ \cite[Theorem 6]{polyanskiy2013saddle}.
A randomized decoder is given by the output distribution $g(\bar{m}|\by)$. We can simulate this choice deterministically given an independent uniform random variable. Hence the randomized decoder can be modeled as a deterministic decoder on the output of the channel and an auxiliary independent random variable. It is easy to see that this auxiliary variable does not change the proof above. Like always, take the $\sup$ on $Q(\by)$ to get the largest lower bound and $\inf$ on $P(\bx)$ to get a code independent bound.

\bibliographystyle{IEEEtran}
\bibliography{bib}

\begin{thebibliography}{1}
\providecommand{\url}[1]{#1}
\csname url@samestyle\endcsname
\providecommand{\newblock}{\relax}
\providecommand{\bibinfo}[2]{#2}
\providecommand{\BIBentrySTDinterwordspacing}{\spaceskip=0pt\relax}
\providecommand{\BIBentryALTinterwordstretchfactor}{4}
\providecommand{\BIBentryALTinterwordspacing}{\spaceskip=\fontdimen2\font plus
\BIBentryALTinterwordstretchfactor\fontdimen3\font minus
  \fontdimen4\font\relax}
\providecommand{\BIBforeignlanguage}[2]{{%
\expandafter\ifx\csname l@#1\endcsname\relax
\typeout{** WARNING: IEEEtran.bst: No hyphenation pattern has been}%
\typeout{** loaded for the language `#1'. Using the pattern for}%
\typeout{** the default language instead.}%
\else
\language=\csname l@#1\endcsname
\fi
#2}}
\providecommand{\BIBdecl}{\relax}
\BIBdecl

\bibitem{polyanskiy2010channel}
Y.~Polyanskiy, H.~V. Poor, and S.~Verd{\'u}, ``Channel coding rate in the
  finite blocklength regime,'' \emph{Information Theory, IEEE Transactions on},
  vol.~56, no.~5, pp. 2307--2359, 2010.

\bibitem{DBLP:journals/tit/VerduH94}
S.~Verd{\'u} and T.~S. Han, ``A general formula for channel capacity,''
  \emph{IEEE Transactions on Information Theory}, vol.~40, no.~4, pp.
  1147--1157, 1994.

\bibitem{somekh2013general}
A.~Somekh-Baruch, ``A general formula for the mismatch capacity,'' \emph{arXiv
  preprint arXiv:1309.7964}, 2013.

\bibitem{ElkayamEE2014}
\BIBentryALTinterwordspacing
N.~Elkayam and M.~Feder. (2014) Exact evaluation of the random coding error
  probability and error exponent. [Online]. Available:
  \url{http://www.eng.tau.ac.il/~elkayam/ErrorExponent.pdf}
\BIBentrySTDinterwordspacing

\bibitem{haim2013importance}
E.~Haim, Y.~Kochman, and U.~Erez, ``The importance of tie-breaking in
  finite-blocklength bounds,'' in \emph{Information Theory Proceedings (ISIT),
  2013 IEEE International Symposium on}.\hskip 1em plus 0.5em minus 0.4em\relax
  IEEE, 2013, pp. 1725--1729.

\bibitem{ShulmanPHD}
N.~Shulman, ``Communication over an unknown channel via common broadcasting,''
  Ph.D. dissertation, Tel Aviv University, 2003.

\bibitem{polyanskiy2013saddle}
Y.~Polyanskiy, ``Saddle point in the minimax converse for channel coding,''
  \emph{Information Theory, IEEE Transactions on}, vol.~59, no.~5, pp.
  2576--2595, 2013.

\bibitem{koga2002information}
H.~Koga \emph{et~al.}, \emph{Information-spectrum methods in information
  theory}.\hskip 1em plus 0.5em minus 0.4em\relax Springer, 2002, vol.~50.

\bibitem{flanders1973differentiation}
H.~Flanders, ``Differentiation under the integral sign,'' \emph{American
  Mathematical Monthly}, pp. 615--627, 1973.

\end{thebibliography}

\end{document}